\newcommand\id{\leavevmode\hbox{\small1\kern-3.3pt\normalsize1}}
\newtheorem{theorem}{Theorem}
\newtheorem{corollary}[theorem]{Corollary}
\title{Bell-CHSH Violation Under Global Unitary Operations: Necessary and Sufficient conditions}
\author{Nirman Ganguly\inst{1,\footnote{ On leave from Department of Mathematics, Heritage Institute of Technology, Kolkata-107, India},\footnote{nirmanganguly@gmail.com}} \and Amit Mukherjee\inst{1}\footnote{amitisiphys@gmail.com} \and Arup Roy\inst{1}\footnote{arup145.roy@gmail.com} \and Some Sankar Bhattacharya\inst{1}\footnote{somesankar@gmail.com} \and Biswajit Paul\inst{2}\footnote{biswajitpaul4@gmail.com} \and Kaushiki Mukherjee\inst{3}\footnote{kaushiki$ \_ $mukherjee@rediffmail.com}}
\institute{                    
  \inst{1} Physics and Applied Mathematics Unit, Indian Statistical Institute, 203 B. T. Road, Kolkata-108, India\\
  \inst{2} Department of Mathematics, South Malda College, Malda, West Bengal, India\\
  \inst{3} Department of Mathematics, Government Girls’ General Degree College, Ekbalpore, Kolkata, India
  }
\pacs{03.65.Ud}{Entanglement and quantum nonlocality}
\pacs{03.67.-a}{Quantum Information}
\abstract{
The relation between Bell-CHSH violation and factorization of Hilbert space is considered here. That is, a state which is local in the sense of the Bell-CHSH inequality under a certain factorization of the underlying Hilbert space can be Bell-CHSH non-local under a different factorization. While this question has been addressed with respect to separability , the relation of the factorization with Bell-CHSH violation has remained hitherto unexplored. We find here, that there is a set containing density matrices which do not exhibit Bell-CHSH violation under any factorization of the Hilbert space brought about by global unitary operations. Using the Cartan decomposition of $ SU(4) $,we characterize the set in terms of a necessary and sufficient criterion based on the spectrum of density matrices. Sufficient conditions are obtained to characterize such density matrices based on their bloch representations. For some classes of density matrices, necessary and sufficient conditions are derived in terms of bloch parameters.  Furthermore, an estimation of the volume of such density matrices is achieved in terms of purity. The criterion is applied to some well-known class of states in two qubits.Since, both local filtering and global unitary operations influence Bell-CHSH violation of a state, a comparative study is made between the two operations. The inequivalence of the two operations(in terms of increasing Bell-CHSH violation) is exemplified through their action on some classes of states.}
\begin{document}

\maketitle
\section{Introduction} Entanglement and non-locality are two inequivalent yet very distinctive features of quantum mechanics\cite{ReviewEnt,ReviewNL,Bell64,CHSH69}. They play a ubiquitous role in many quantum computation and information processing tasks\cite{Bennett93,Bennett92,random, key,dw,game}. Entanglement, being a characteristic trait of a density matix\cite{entbloch} has been characterized in terms of state parameters and bloch representation \cite{entbloch}. Whenever we say that a state is entangled we assume a particular factorization of the underlying Hilbert space. Therefore, a state which is entangled in a certain basis can be separable in a different basis. In fact,for any entangled state(pure and mixed) there exists at least one basis in which it is separable \cite{entglobal}.\\ 
\indent However, there exist separable states which preserve separability under any change of basis, subsequently termed as absolutely separable states \cite{absepzyck}.Equivalently, one may also note that absolutely separable states are exactly those states which preserve separability under the action of any global unitary, i.e., a state $ \sigma_{as} $ will be termed as an absolutely separable state if $ U \sigma_{as} U^{\dagger}$ is separable for any unitary $ U $. They are characterized in terms of their spectrum \cite{absepver,johnston} and also termed as states which are separable from spectrum in view of the open problem pointed out in \cite{Knill}. The set containing absolutely separable states has the same existence in the set of separable states as the separable states have in the space of all density matrices. They form a convex and compact subset of the separable set thus allowing for the construction of witness operators to detect non-absolutely separable states from which useful entanglement can be created under global unitary \cite{witabs}.\\
\indent An equivalent study in the non-local scenario, in terms of violation of the Bell-CHSH inequality\cite{Bell64,CHSH69} was first proposed by some of us\cite{bellGU}, where we showed that there are states which preserve their "\textit{local}" character under any global unitary operation. We termed such states as \textit{absolutely Bell-CHSH local} states and showed that they form a convex and compact subset of the \textit{Bell-CHSH local} set.Precisely, a state $ \sigma_{al} $ is termed as \textit{absolutely Bell-CHSH local} if $ U \sigma_{al} U^{\dagger}  $ is \textit{Bell-CHSH local} for any unitary $ U $.  On a different perspective the utility to detect states which can become Bell-CHSH non-local under global unitary was also highlighted in\cite{bellGU} and a subsequent mechanism was suggested to detect such states.\\
\indent However, given a state , how does one identify that it is \textit{absolutely Bell-CHSH local} and what is the size of the set containing \textit{absolutely Bell-CHSH local} states? The answer to these two questions form the main constituents of the present work. We derive necessary and sufficient conditions based on the spectrum of the density matrix to identify whether it is \textit{absolutely Bell-CHSH local}. In the context of quantum state tomography \cite{}, it is important to identify the number of state parameters required to acquire knowledge about some property of the state. Therefore,we also derive sufficient conditions based on the bloch parameters of a density matrix for the above-mentioned identification.We find that, based on the bloch parameters necessary and sufficient conditions can be derived to identify some special class of density matrices.  Next, we find a characterization of the set in terms of purity of density matrices which alternatively provides an estimation of the size of the \textit{absolutely Bell-CHSH local} set. Specifically, we find a ball of a certain radius within which all states are \textit{absolutely Bell-CHSH local}. Further our analysis is validated by illustrations from well known class of states.\\
\indent Local filtering operations \cite{verstraetebell} can have a non-trivial effect on the nonlocality of a state just like global unitary action. However, no local filtering operation can increase the Bell-CHSH violation of Bell diagonal states \cite{verstraetebell}. Although very different in paradigm we have found an interesting similarity between these two operations as global unitary action on Bell diagonal states also fails to increase CHSH violation. However the inequivalence between these two operations come to the fore when we see their action on other classes of states, which we have noted in our work.\\
\indent Our work is organized in the following way. The next two sections characterize the \textit{absolutely Bell-CHSH local} set in terms of eigenvalues and the bloch parameters of a density matrix. The section that follows evaluates the size of the \textit{absolutely Bell-CHSH local} set in terms of purity and derives the \textit{absolutely Bell-CHSH local} ball. Illustrations from well known classes of states follows next followed by a comparison between global unitary and local filtering operations. We end with conclusions and comments on possible courses of future work.
\section{Characterization of Bell-CHSH local states based on spectrum}
A density matrix living in $ \mathbf{B}(\textbf{C}^2 \otimes \textbf{C}^2) $ can be expressed in the Hilbert-Schmidt basis as ,
\begin{equation}
\sigma = \frac{1}{4}[I \otimes I + \vec{u}.\vec{s} \otimes I + I \otimes \vec{v}.\vec{s} + \Sigma_{i,j=1}^{3} t_{ij} s_i \otimes s_{j} ]
\end{equation}
Here, $ \vec{u} , \vec{v}  $ are the local Bloch vectors and $ t_{ij} = Tr[\sigma(s_i \otimes s_j)] $, $ s_{i}  $ are the Pauli matrices.
In a much celebrated work \cite{horobell}, the Horodecki family derived a necessary and sufficient condition to verify whether a state is local with respect to the Bell-CHSH inequality, based on the parameters of the state. The condition was based on the value of a function $ M(\sigma_{l}) = \lambda_{max1} + \lambda_{max2}; \lambda_{max1}, \lambda_{max2}$ being the maximum two eigen values of $ Y = T^{\dagger}T $ where $ T = [t_{ij}]$ is the correlation matrix of $ \sigma_{l} $ . It was then stated that a state $ \sigma_{l} $ is Bell-CHSH local iff $ M(\sigma_{l}) \le 1 $. The maximal Bell-CHSH violation for any state $ \chi $ was then showed to be $ 2 \sqrt{M(\chi)} $. In this light, the \textit{absolutely Bell-CHSH local} set can also be defined as $ \mathbf{AL} = \lbrace \sigma_{al} : M(U \sigma_{al} U^{\dagger}) \le 1 , \forall U \rbrace $.\\
\indent The first main result of our work , now follows below:
\begin{theorem}
	A state $\sigma$ is absolutely Bell-CHSH local if and only if $(2a_1+2a_2-1)^2 + (2a_1+2a_3-1)^2 \le 1 $, where $ a_1, a_2, a_3 $ are the highest three eigenvalues of $ \sigma $ in a decreasing order.  
\end{theorem}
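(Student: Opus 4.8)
The plan is to turn the universally quantified condition ``$M(U\sigma U^{\dagger})\le 1$ for all $U$'' into a single optimisation over the spectrum, and then to identify the maximiser explicitly. First I would record that $M$ is a local-unitary invariant: a local unitary $u_1\otimes u_2$ rotates the correlation matrix as $T\mapsto O_1 T O_2^{\mathsf T}$ with $O_1,O_2\in SO(3)$ the associated rotations, so that $Y=T^{\dagger}T\mapsto O_2\,Y\,O_2^{\mathsf T}$ retains its spectrum and hence leaves $M$ unchanged. Since $U\sigma U^{\dagger}$ ranges over \emph{all} density matrices isospectral to $\sigma$ as $U$ runs over the unitary group, the defining requirement for $\sigma\in\mathbf{AL}$ is equivalent to $\max\{M(\rho):\rho\ \text{isospectral to}\ \sigma\}\le 1$. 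It therefore suffices to compute this orbit-maximum of $M$ as a function of the eigenvalues $a_1\ge a_2\ge a_3\ge a_4$ with $\sum_i a_i=1$.

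Next I would exploit the purity identity $\tr[\rho^2]=\tfrac14\big(1+|\vec u|^2+|\vec v|^2+\sum_{ij}t_{ij}^2\big)$, which is constant on the orbit because $\tr[\rho^2]=\sum_i a_i^2$. Writing $t_1^2\ge t_2^2\ge t_3^2$ for the eigenvalues of $Y$ and using $M=t_1^2+t_2^2=\|T\|_F^2-t_3^2$, this rewrites the functional as
\[
M(\rho)=\Big(4\sum_i a_i^2-1\Big)-\big(|\vec u|^2+|\vec v|^2+t_{\min}^2\big),
\]
so maximising $M$ over the orbit is the same as minimising $g(\rho):=|\vec u|^2+|\vec v|^2+t_{\min}^2$. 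To organise the optimisation I would diagonalise $\sigma=V\,\mathrm{diag}(a_i)\,V^{\dagger}$ and invoke the Cartan (KAK) decomposition $U=(k_1\otimes k_2)\,\exp\!\big(i\sum_k\theta_k\,s_k\otimes s_k\big)\,(l_1\otimes l_2)$; the outer local factor $k_1\otimes k_2$ drops out by the invariance above, leaving a reduced search over the three Cartan angles $\theta_k$ and the inner local unitary acting on $\mathrm{diag}(a_i)$.

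I would then exhibit the conjectured extremiser: the Bell-diagonal state obtained by placing $a_1,a_2,a_3,a_4$ on the four Bell states. This has $\vec u=\vec v=0$ and diagonal correlation matrix with entries $t_1=a_1-a_2+a_3-a_4$, $t_2=-a_1+a_2+a_3-a_4$, $t_3=a_1+a_2-a_3-a_4$ (up to signs). Using $a_1\ge a_2\ge a_3\ge a_4$ one checks $t_3\ge t_1\ge |t_2|$, so the smallest singular value squared is $t_2^2$, giving $g=(2a_2+2a_3-1)^2$ and hence, after substituting $a_4=1-a_1-a_2-a_3$,
\[
M=(2a_1+2a_2-1)^2+(2a_1+2a_3-1)^2 .
\]
Thus the claimed expression is attained, and it remains only to prove it is maximal.

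The main obstacle is precisely this last step: showing $g(\rho)\ge (2a_2+2a_3-1)^2$ for \emph{every} isospectral $\rho$, i.e.\ that switching on nonzero local Bloch vectors (which strictly lowers $\|T\|_F^2$) can never be overcompensated by a reduction of $t_{\min}^2$. I expect to handle this by analysing the first-order stationarity conditions for $g$ along the Cartan-parametrised orbit and showing they force $\vec u=\vec v=0$ at the minimiser, collapsing the problem to a finite optimisation over the assignment of $a_1,a_2,a_3,a_4$ to the correlation eigenvalues; among the three possible pairings the ordering hypothesis selects the assignment above, because $|a_1+a_4-a_2-a_3|$ is the smallest of the three pairwise splittings. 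This orbit-optimality argument — rather than the explicit evaluation on the Bell-diagonal candidate — is where essentially all of the work resides.
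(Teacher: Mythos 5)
Your setup is sound and, up to the point where you evaluate the candidate extremiser, it runs parallel to the paper: local-unitary invariance of $M$, reduction of ``absolutely Bell-CHSH local'' to a maximisation of $M$ over the isospectral orbit, and the computation $M=(2a_1+2a_2-1)^2+(2a_1+2a_3-1)^2$ at the Bell-diagonal state (the paper's Case I). Your reformulation $M(\rho)=(4\sum_i a_i^2-1)-(|\vec u|^2+|\vec v|^2+t_{\min}^2)$ via the purity identity is a genuinely nice repackaging that the paper does not use, and it isolates the correct target inequality $|\vec u|^2+|\vec v|^2+t_{\min}^2\ge(2a_2+2a_3-1)^2$ on the orbit.

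However, that target inequality --- which you yourself identify as ``where essentially all of the work resides'' --- is never proved; it is only announced as something you ``expect to handle'' by first-order stationarity along the Cartan-parametrised orbit. That is a genuine gap, not a deferred routine verification. The plan as stated is also fragile: $t_{\min}^2$ is not differentiable where singular values of $T$ cross, stationarity conditions only locate critical points (you would still need to enumerate and compare them, and treat degenerate spectra), and the assertion that switching on Bloch vectors ``can never be overcompensated'' by driving $t_{\min}$ to zero is precisely the content of the theorem rather than a consequence of the stationarity equations. The paper closes this step by a different route: it cites the Verstraete--Wolf result that for a fixed spectrum the maximal Bell-CHSH violation is attained at the corresponding Bell-diagonal state, and backs this up with an explicit maximisation over the Cartan angles starting from the computational-basis-diagonal representative of the orbit (its Case II), to which any $\sigma$ can be reduced by diagonalisation. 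To complete your argument you should either import that known extremality result or actually carry out the orbit optimisation (Cartan angles together with the inner local unitary acting on $\mathrm{diag}(a_1,\dots,a_4)$); until then the ``only if'' direction of the theorem is unproved in your write-up.
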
  
\begin{proof}
	Define a function $ F(\sigma)= \underset{U \in SU(4)}{Max} M(U \sigma U^\dagger)$. It is easy to note that, $ F(\sigma) \le 1$, iff $ \sigma \in \mathbf{AL}$. In view of the Cartan decomposition of $ SU(4) $ \cite{optcreate}, any $ U \in SU(4) $ can be decomposed as ,
	\begin{equation}
	U = U_A \otimes U_B U_d V_A \otimes V_B \label{cartan1}
	\end{equation}
	where $ U_A , U_B , V_A , V_B$ are local unitaries and $ U_d $ is the basic non-local unitary. A local unitary $ U_L $ changes the correlation matrix $ T_\sigma $ corresponding to $ \sigma $ in the following way, $ T^ \prime = Q_1 T_\sigma Q_2^{\dagger}$, where $ Q_i s$ are rotation matrices with $ det(Q_i)=1 , Q_i^{\dagger} Q_i=I$ \cite{infhoro}. Therefore, 
	\begin{eqnarray}
	T^ {\prime^{\dagger}} T^ \prime = Q_2 T_\sigma^{\dagger} T_\sigma Q_2
^{\dagger}	\end{eqnarray}
Since, the above relation signifies a similarity transformation, the eigenvalues of $ T_\sigma^{\dagger} T_\sigma  $ remain unchanged signalling the invariance of $ M(\sigma) $ under local unitary transformation. Therefore, $ M(\sigma) $ can only be maximized by the action of the basic non-local operator $ U_{d} $. As a result, in order to check whether a state is \textit{absolutely Bell-CHSH local}, it is enough to see the change under the action of the basic non-local unitary. The operator $ U_d $ is diagonal in the magic basis and hence will also be diagonal in the Bell basis\cite{optcreate}, and thus can be expressed as ,
\begin{equation}
U_d = \Sigma_{k=1}^{4} e^{-\textbf{i}\lambda_k} \vert \phi_k \rangle \langle \phi_k \vert
\end{equation}
with $ \textbf{i} = \sqrt{-1}$. Here, $ \vert \phi_k \rangle  $ are the maximally entangled Bell states and $ \lambda_k s$ are given by ,
$ \lambda_1 = x -y +z , \lambda_2 = -x +y +z , \lambda_3 = -x-y-z , \lambda_4 = x +y -z$. ($ x,y,z \in [0,2\pi]$ ).\\
The proof will be done considering three different cases underlined as below:\\
\textbf{Case-I} Consider the Bell diagonal state given by :
\begin{equation}
\sigma_{bell} = a_1 \vert \phi_1 \rangle \langle \phi_1 \vert + a_2 \vert \phi_2 \rangle \langle \phi_2 \vert + a_3 \vert \phi_3 \rangle \langle \phi_3 \vert +a_4 \vert \phi_4 \rangle \langle \phi_4 \vert
\end{equation}
Without any loss of generality we assume $ a_1 \ge a_2 \ge a_3 \ge a_4 $. Since, the basic non-local operator $ U_d $ is itself diagonal with respect to the Bell basis and the factor $ e^{-\textbf{i}\lambda_k} $ cancels out, it is easy to see that $ \sigma_{bell} $ remains invariant under the transformation $ U_d \sigma_{bell} U_d^{\dagger}$. This when clubbed with the fact that local unitaries do not change the eigenvalues of $ T^{\dagger}_{\sigma_{bell}} T_{\sigma_{bell}} $, imply that even after a global unitary action the eigenvalues of $ T^{\dagger}_{\sigma_{bell}} T_{\sigma_{bell}} $ remain unchanged($ T_{\sigma_{bell}} $ is the correlation matrix corresponding to $ \sigma_{bell} $). Thus, $ M(\sigma_{bell}) $ cannot be increased further by global unitary action. \\
For the Bell diagonal state , $ M(\sigma_{bell}) = (2a_1+2a_2-1)^2 + (2a_1+2a_3-1)^2 $ (as $ a_4 = 1- a_1-a_2-a_3 $). Now, $(2a_1+2a_2-1)^2 + (2a_1+2a_3-1)^2 \le 1 $ iff the Bell diagonal state is \textit{Bell-CHSH local}. In view of the fact that global unitary does not maximize  $ M(\sigma_{bell}) $ , we can conclude that $(2a_1+2a_2-1)^2 + (2a_1+2a_3-1)^2 \le 1 $ iff $\sigma_{bell}$ is \textit{absolutely Bell-CHSH local}. In fact, this also shows that whenever a Bell diagonal state is \textit{Bell-CHSH local} then it is \textit{absolutely Bell-CHSH local}. This is a typical feature of the Bell diagonal states.\\
\textbf{Case-II} Consider now, a state $ \sigma_{comp} $ which is diagonal in the computational basis,
\begin{equation}
\sigma_{comp}= a_1 \vert 00 \rangle \langle 00 \vert + a_2 \vert 01 \rangle \langle 01 \vert + a_3 \vert 10 \rangle \langle 10 \vert + a_4 \vert 11 \rangle \langle 11 \vert 
\end{equation}
Again, without loss of generality it is assumed that $ a_1 \ge a_2 \ge a_3 \ge a_4 $. After the action of $ U_{d} $, the eigenvalues of $ T^{\dagger}_{ \sigma_{comp}} T_{\sigma_{comp}} $ change and are found to be as given below:

\begin{eqnarray}
&&A= [sin[2x-2y](2a_1+a_2+a_3-1)+sin[2x+2y](a_2-a_3)]^2 \nonumber \\
&&B= [sin[2x-2y](2a_1+a_2+a_3-1)\nonumber-sin[2x+2y](a_2-a_3)]^2 \nonumber \\
&&C= [2a_2+2a_3-1]^2 
\end{eqnarray}

Therefore, $ F(\sigma_{comp})  = Max (\underset{x,y}{Max} (A+B), \underset{x,y}{Max} (A+C),\underset{x,y}{Max} (B+C))$.
After some algebraic calculation it is found that :
\begin{eqnarray}
\underset{x,y}{Max} (A+B) = (2a_1+2a_2-1)^2 + (2a_1+2a_3-1)^2 \nonumber\\
\underset{x,y}{Max} (A+C) = (2a_1+2a_2-1)^2 + (2a_2+2a_3-1)^2 \nonumber\\
\underset{x,y}{Max} (B+C) = (2a_1+2a_2-1)^2 + (2a_2+2a_3-1)^2 \nonumber\\
\end{eqnarray}
Therefore $ F(\sigma_{comp})  = \underset{x,y}{Max} (A+B) = (2a_1+2a_2-1)^2 + (2a_1+2a_3-1)^2 $.\\
 Hence $ \sigma_{comp} \in \mathbf{AL} $ iff  $ (2a_1+2a_2-1)^2 + (2a_1+2a_3-1)^2 \le 1 $.\\
\textbf{Case-III} Now, finally consider a state $ \sigma_{arb} $ written in any arbitrary basis.\\
For any given spectrum the maximal Bell-CHSH violation is obtained at the respective Bell-diagonal state\cite{verstraetebell}, where the quantity $ M(\sigma_{bell}) = (2a_1+2a_2-1)^2 + (2a_1+2a_3-1)^2  $, as already obtained in Case-I. The same maximal value is obtained by maximizing the corresponding value for states diagonal in the computational basis in \textbf{Case-II}. Now,
one can always find a unitary which changes the $ \sigma_{arb} $ to a state diagonal in the computational basis and repeat the same steps to obtain the maximal value.\\
Hence,$ F(\sigma_{arb})  =(2a_1+2a_2-1)^2 + (2a_1+2a_3-1)^2 $ and thus  $ \sigma_{arb} \in \mathbf{AL} $ iff  $ (2a_1+2a_2-1)^2 + (2a_1+2a_3-1)^2 \le 1 $.\\
Thus, combining the above cases, a state $\sigma$ is \textit{absolutely Bell-CHSH local} if and only if $(2a_1+2a_2-1)^2 + (2a_1+2a_3-1)^2 \le 1 $. Hence, the proof.
\end{proof}
One may note the following corollary , in view of the theorem,
\begin{corollary}
The reduced state $ \sigma_{AB} $ of any pure three-qubit state $ \vert \Upsilon \rangle_{ABC} $ is absolutely Bell-CHSH local if and only if $ \sigma_{C} $ is the maximally-mixed state.
\end{corollary}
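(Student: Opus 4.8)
The plan is to reduce the corollary to a direct application of Theorem~1, after first pinning down the spectrum of $\sigma_{AB}$ using the purity of the global state. First I would invoke the Schmidt decomposition of $\vert \Upsilon \rangle_{ABC}$ across the bipartition $AB\vert C$. Since subsystem $C$ is a single qubit, this bipartition admits at most two Schmidt coefficients, so by the standard fact that the two reductions of a pure state share the same nonzero spectrum, $\sigma_{AB}$ and $\sigma_{C}$ have the same (at most two) nonzero eigenvalues, while $\sigma_{AB}$ carries two additional zero eigenvalues forced by its rank being at most two.

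Writing the eigenvalues of $\sigma_{C}$ as $p_1 \ge p_2$ with $p_1 + p_2 = 1$, the ordered spectrum of $\sigma_{AB}$ is therefore $a_1 = p_1,\ a_2 = p_2,\ a_3 = a_4 = 0$. The key simplification, which is what makes the whole argument go through cleanly, is that $a_1 + a_2 = 1$, so the first bracket in the criterion of Theorem~1 satisfies $2a_1 + 2a_2 - 1 = 1$ identically. Substituting this together with $a_3 = 0$ into the absolute-locality condition $(2a_1+2a_2-1)^2 + (2a_1+2a_3-1)^2 \le 1$ collapses it to $1 + (2p_1 - 1)^2 \le 1$, i.e. $(2p_1 - 1)^2 \le 0$. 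This holds if and only if $p_1 = \tfrac12$, equivalently $p_1 = p_2 = \tfrac12$, which is precisely the statement that $\sigma_{C} = \tfrac12 I$ is maximally mixed. Since every link in this chain is an equivalence, both directions of the corollary follow at once.

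I do not expect a substantial obstacle: the conceptual content lies entirely in recognizing that tracing out a single qubit rigidly fixes $a_3 = a_4 = 0$ and $a_1 + a_2 = 1$, which is exactly what degenerates the Theorem~1 criterion into a condition on $\sigma_{C}$ alone. The only point requiring care is the spectral bookkeeping, namely verifying that the required ordering $a_1 \ge a_2 \ge a_3$ is respected and that the two zero eigenvalues are correctly placed as the lowest ones; this is guaranteed by the rank-two constraint, and it also covers the degenerate edge case $p_2 = 0$ (a product state across $AB\vert C$), where $a_1 = 1$ and the criterion fails, consistently with $\sigma_{C}$ then being pure rather than maximally mixed.
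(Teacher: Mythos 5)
Your proof is correct and follows essentially the same route as the paper: both arguments reduce the corollary to Theorem~1 by noting that $\sigma_{AB}$ inherits the two eigenvalues of $\sigma_C$ (the paper writes them as $(1\pm c)/2$ with $c$ the Bloch-vector norm, which is just your $p_1,p_2$ in different notation) together with two zeros, so the criterion collapses to $1+(2p_1-1)^2\le 1$. Your write-up is in fact slightly more explicit than the paper's about the Schmidt-decomposition justification and the ordering of the zero eigenvalues.
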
 
\begin{proof}
Consider that $ c= \vert \overrightarrow {\textbf{c}}  \vert$ ,where $ \overrightarrow {\textbf{c}} $ is the bloch vector for $ \sigma_{C} $. Then the eigenvalues of $ \sigma_{AB} $ are $ \lbrace (1+c)/2 , (1-c)/2 , 0 , 0 \rbrace $. Hence, in view of the above theorem , $ \sigma_{AB} $ is \textit{absolutely Bell-CHSH local} iff $ c $ vanishes.
\end{proof}
\section{Characterization of Absolutely Bell-CHSH local states based on bloch parameters }
Quantum state tomography identifies the number of ideal measurements to be done on a state to reveal some property of the state \cite{tomo}. Therefore, it becomes very important to have a characterization in terms of bloch parameters of a state.\\
As every density matrix has a Hilbert-Schmidt decomposition, it is pertinent whether one can comment on the \textit{absolute Bell-CHSH local} character of a density matrix, based on the parameters in the decomposition.
Recall that, a density matrix living in $ \mathbf{B}(\textbf{C}^2 \otimes \textbf{C}^2) $ can be expressed in the Hilbert-Schmidt basis as ,
\begin{equation}
\sigma = \frac{1}{4}[I \otimes I + \vec{u}.\vec{s} \otimes I + I \otimes \vec{v}.\vec{s} + \Sigma_{i,j=1}^{3} t_{ij} s_i \otimes s_{j} ]
\end{equation}
Here, $ \vec{u} , \vec{v}  $ are the local Bloch vectors and $ t_{ij} = Tr[\sigma(s_i \otimes s_j)] $, $ s_{i}  $ are the Pauli matrices. In what follows below, we see that in some cases we can arrive at necessary and sufficient conditions to guarantee that a density matrix is indeed \textit{absolutely Bell-CHSH local} . 
\begin{theorem}
	A Bell diagonal state is \textit{absolutely Bell-CHSH local} iff $Max(t_{11}^2+ t_{22}^2 ,t_{11}^2+ t_{33}^2 , t_{22}^2+ t_{33}^2 ) \le 1 $, where $ t_{ii} $ are the diagonal elements of the correlation matrix of the Bell diagonal state.
\end{theorem}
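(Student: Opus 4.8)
The plan is to connect the Bloch-parameter condition to the Horodecki criterion through the diagonal structure of the correlation matrix of a Bell diagonal state, and then to promote \emph{Bell-CHSH local} to \emph{absolutely Bell-CHSH local} using the equivalence already established for Bell diagonal states in Case-I of the proof of Theorem 1.

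First I would record the special form of the correlation matrix. Writing $\sigma_{bell}$ in the Hilbert-Schmidt basis, the local Bloch vectors $\vec{u},\vec{v}$ vanish and the only surviving correlations are the diagonal ones, so that $T = \mathrm{diag}(t_{11},t_{22},t_{33})$. Consequently $Y = T^\dagger T = \mathrm{diag}(t_{11}^2,t_{22}^2,t_{33}^2)$, and the three eigenvalues of $Y$ are exactly $t_{11}^2,t_{22}^2,t_{33}^2$.

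Next I would apply the Horodecki criterion, for which $M(\sigma_{bell})$ equals the sum of the two largest eigenvalues of $Y$. The elementary observation that, for any triple of nonnegative numbers, the sum of the two largest equals the maximum of the three pairwise sums then gives $M(\sigma_{bell}) = Max(t_{11}^2+t_{22}^2,\,t_{11}^2+t_{33}^2,\,t_{22}^2+t_{33}^2)$. Hence $\sigma_{bell}$ is \emph{Bell-CHSH local} iff this maximum does not exceed $1$. Finally, invoking Case-I of the proof of Theorem 1 — where it was shown that a global unitary cannot increase $M$ of a Bell diagonal state, so that such a state is \emph{Bell-CHSH local} iff it is \emph{absolutely Bell-CHSH local} — I would conclude that $\sigma_{bell}$ is \emph{absolutely Bell-CHSH local} iff $Max(t_{11}^2+t_{22}^2,\,t_{11}^2+t_{33}^2,\,t_{22}^2+t_{33}^2) \le 1$.

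The argument is short because the two substantive ingredients (the Horodecki criterion and the locality/absolute-locality equivalence for Bell diagonal states) are already in hand; the only step requiring genuine care is the identification of the eigenvalues of $Y$ with the squared diagonal correlations $t_{ii}^2$, i.e.\ verifying that the correlation matrix of a Bell diagonal state is diagonal in the Pauli basis. One could alternatively bypass the Horodecki step by noting that, under $a_4 = 1-a_1-a_2-a_3$, the quantities $2a_i+2a_j-1$ appearing in Theorem 1 coincide up to sign with the $t_{ii}$, so the spectral condition of Theorem 1 transcribes directly into the stated Bloch-parameter condition; I would keep this as a consistency check rather than as the main line of proof.
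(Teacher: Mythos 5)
Your proposal is correct and follows essentially the same route as the paper: both use that a Bell diagonal state has vanishing local Bloch vectors and a diagonal correlation matrix, so the eigenvalues of $T^{\dagger}T$ are $t_{11}^2,t_{22}^2,t_{33}^2$, giving $M(\sigma_{bell})$ as the maximum pairwise sum, and both then invoke the invariance of these eigenvalues under global unitaries (established in Case-I of Theorem 1) to upgrade locality to absolute locality. Your added remarks (the explicit ``sum of two largest equals max of pairwise sums'' observation and the consistency check against the spectral condition) are just extra detail on the same argument.
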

\begin{proof}
	The proof follows quite easily in view of the fact that the basic non-local unitary operator does not change $ \sigma_{bell} $. Further, the local bloch vectors are zero for a Bell diagonal state and its correlation matrix is diagonal. Hence the desired eigenvalues are $ t_{11}^2, t_{22}^2 ,t_{33}^2 $ . Now, $ M(\sigma_{bell}) = Max(t_{11}^2+ t_{22}^2 ,t_{11}^2+ t_{33}^2 , t_{22}^2+ t_{33}^2 ) $. Since, eigenvalues of $ T^{\dagger}_{\sigma_{bell}} T_{\sigma_{bell}} $ do not change even with a global unitary, $ M(\sigma_{bell})$ does not change. Therefore, $Max(t_{11}^2+ t_{22}^2 ,t_{11}^2+ t_{33}^2 , t_{22}^2+ t_{33}^2 ) \le 1 $ iff $ \sigma_{bell} $ is \textit{absolutely Bell-CHSH local}. 
\end{proof}
In order to find equivalent conditions for other states, we need to note that unlike Bell diagonal states, the eigenvalues of $ T^{\dagger} T$ corresponding to other states will change due to the action of a global unitary. \\
In order to note the change in terms of the bloch parameters we use an alternative decomposition of an unitary operator as given in \cite{entbloch},
\begin{equation} 
U = (U_1 \otimes U_2) U_d(\theta_1,\theta_2,\theta_3) (U_3 \otimes U_4) \label{cartan2}
\end{equation}
where again $ U_{k} $ are local unitaries and $ U_d = exp[\frac{\textbf{i}}{2}(\theta_1 s_1 \otimes s_1 + \theta_2 s_2 \otimes s_2 + \theta_3 s_3 \otimes s_3)] $ is the basic non-local unitary. We again note that it is the basic non-local operator which is responsible for any change in the eigenvalues of $ T^{\dagger} T$. On action of the basic non-local unitary operator, the modifications in $\vec{u},\vec{v}, T$ are given below \cite{entbloch},
\begin{eqnarray}
u_k^\prime = u_k cos \theta_i cos \theta_j + v_k sin \theta_i sin \theta_j + \epsilon_{ijk}(T_{ij} cos \theta_i sin \theta_j - T_{ji} sin \theta_i cos \theta_j) \\
v_k^\prime = v_k cos \theta_i cos \theta_j + u_k sin \theta_i sin \theta_j + \epsilon_{ijk}(T_{ji} cos \theta_i sin \theta_j - T_{ij} sin \theta_i cos \theta_j) \\
T_{ij}^\prime = T_{ij}cos \theta_i cos \theta_j + T_{ji} sin \theta_i sin \theta_j - \epsilon_{ijk}(u_k cos \theta_i sin \theta_j - v_k sin \theta_i cos \theta_j)
\end{eqnarray}
The indices $i,j,k$ are distinct in the first two equations and $\epsilon_{ijk}$ is the Levi-Civita symbol.\\
For states which are diagonal in the computational basis, the only non-zero parameters in its Hilbert-Schmidt representation are $u_3,v_3,t_{33}$. As a result we have the following result,
\begin{theorem}
A state $\sigma_{comp}$ diagonal in the computational basis is absolutely Bell-CHSH local iff
$ Max[\underset{\theta_1,\theta_2,\theta_3}{Max}(t_{33}^2 + c_{11}^2), \underset{\theta_1,\theta_2,\theta_3}{Max}(t_{33}^2 + c_{22}^2) ,\underset{\theta_1,\theta_2,\theta_3}{Max}(c_{11}^2 + c_{22}^2 )] \leq 1 $,
where $ c_{11}=u_3 cos \theta_2 sin \theta_1 - v_3 sin \theta_2 cos \theta_1, c_{22}=v_3 cos \theta_2 sin \theta_1 - u_3 sin \theta_2 cos \theta_1 $. ($\theta_i s$ being the parameters of the basic non-local operator as given in eq. (\ref{cartan2}) )
\end{theorem}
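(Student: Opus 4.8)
The plan is to imitate the strategy of Theorem 1: reduce the global maximisation defining $F(\sigma_{comp})$ to the action of the basic non-local unitary $U_d(\theta_1,\theta_2,\theta_3)$ appearing in the decomposition (\ref{cartan2}), and then read off the two largest eigenvalues of $T'^{\dagger}T'$ after that action directly from the Bloch-parameter transformation rules. Since the outer local unitaries in (\ref{cartan2}) implement a similarity transformation on $T^{\dagger}T$ and hence leave its spectrum invariant, it suffices to evaluate $M$ for $U_d\,\sigma_{comp}\,U_d^{\dagger}$ and then maximise over $\theta_1,\theta_2,\theta_3$.

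First I would record that for a state diagonal in the computational basis the only non-vanishing Hilbert--Schmidt parameters are $u_3,v_3,t_{33}$, so that in the transformation rules every contribution carrying a label $1$ or $2$ on $\vec u,\vec v$ or $T$ is killed. Feeding this into the rule for $T_{ij}'$, the rotation part $T_{ij}\cos\theta_i\cos\theta_j+T_{ji}\sin\theta_i\sin\theta_j$ survives only on the diagonal, giving the invariant $T_{33}'=t_{33}$ together with $T_{11}'=T_{22}'=0$, while the Levi-Civita drift term $-\epsilon_{ijk}(u_k\cos\theta_i\sin\theta_j-v_k\sin\theta_i\cos\theta_j)$ is non-zero only for $k=3$, i.e. for $\{i,j\}=\{1,2\}$. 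Keeping track of the two signs $\epsilon_{123}=+1$ and $\epsilon_{213}=-1$, this identifies the only newly generated off-diagonal entries as $T_{12}'=c_{22}$ and $T_{21}'=c_{11}$, with $c_{11},c_{22}$ precisely as in the statement.

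Thus $T'$ is supported on the three entries $T_{12}'=c_{22}$, $T_{21}'=c_{11}$ and $T_{33}'=t_{33}$, and a direct multiplication shows that $Y'=T'^{\dagger}T'=\mathrm{diag}(c_{11}^2,\,c_{22}^2,\,t_{33}^2)$ is already diagonal, so that its eigenvalues are simply $c_{11}^2,c_{22}^2,t_{33}^2$. Since $M$ is the sum of the two largest eigenvalues I then obtain $M(U_d\sigma_{comp}U_d^{\dagger})=\mathrm{Max}(t_{33}^2+c_{11}^2,\,t_{33}^2+c_{22}^2,\,c_{11}^2+c_{22}^2)$. Maximising over $\theta_1,\theta_2,\theta_3$, interchanging the inner and outer maxima, and using $F(\sigma_{comp})\le 1\Leftrightarrow\sigma_{comp}\in\mathbf{AL}$ then yields exactly the claimed equivalence.

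The main obstacle I anticipate is the index bookkeeping of the middle step: one has to check that, after substituting the sparse parameter set, the transformed correlation matrix really collapses to the single invariant $t_{33}$ plus the off-diagonal pair $\{T_{12}',T_{21}'\}$, so that $Y'$ comes out exactly diagonal with no cross terms. A secondary point to justify is that maximising over $U_d$ alone (rather than over a local unitary followed by $U_d$) already attains $F$; this is legitimate because, as in Theorem 1, the lengths of the local Bloch vectors and the singular values of the correlation matrix are the only local-unitary invariants feeding the subsequent maximisation, so the value obtained coincides with the spectral maximum established there.
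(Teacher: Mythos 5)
Your proposal is correct and follows essentially the same route as the paper: reduce to the basic non-local unitary $U_d$ of eq.~(\ref{cartan2}), substitute the sparse parameter set $u_3,v_3,t_{33}$ into the Bloch-transformation rules, and read off the eigenvalues of $(T')^{\dagger}T'$ as $c_{11}^2$, $c_{22}^2$, $t_{33}^2$ before maximising over the $\theta_i$. You actually supply more detail than the paper (the explicit sparse form of $T'$ with $T_{12}'=c_{22}$, $T_{21}'=c_{11}$, and the check that $(T')^{\dagger}T'$ is diagonal), and your eigenvalue list corrects what is evidently a typo in the paper's proof, which writes $t_{11}^2$ where $t_{33}^2$ is meant.
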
 
\begin{proof}
As noted earlier, the basic non-local operator is the one responsible for changing the eigenvalues of the  matrix $T^{\dagger} T$. Now, since the only non-zero parameters in the Hilbert-Schmidt representation of a density matrix diagonal in the computational basis are $u_3 , v_3 , t_{33}$, the eigenvalues of the changed $(T^\prime)^{\dagger} T^\prime$ are $t_{11}^2 , c_{11}^2 , c_{22}^2$. Therefore,
\begin{equation}
 F(\sigma_{comp})= Max[\underset{\theta_1,\theta_2,\theta_3}{Max}(t_{33}^2 + c_{11}^2), \underset{\theta_1,\theta_2,\theta_3}{Max}(t_{33}^2 + c_{22}^2) ,\underset{\theta_1,\theta_2,\theta_3}{Max}(c_{11}^2 + c_{22}^2 )]
 \end{equation}
   Thus , $Max[\underset{\theta_1,\theta_2,\theta_3}{Max}(t_{33}^2 + c_{11}^2), \underset{\theta_1,\theta_2,\theta_3}{Max}(t_{33}^2 + c_{22}^2) ,\underset{\theta_1,\theta_2,\theta_3}{Max}(c_{11}^2 + c_{22}^2 )] \le 1$ iff $\sigma_{comp}$ is \textit{absolutely Bell-CHSH local}.
\end{proof}
For any arbitrary density matrix $\sigma$ it is in general difficult to arrive at the necessary and sufficient conditions in a closed form. The difficulty arises from the difficulty in the computation of the eigenvalues of $(T_{\sigma}^\prime)^{\dagger} T^\prime_{\sigma}$ after the action of the basic non-local unitary. However one may derive a sufficient condition with the following observation. The trace of $(T_{\sigma}^\prime)^{\dagger} T^\prime_{\sigma}$ is necessarily greater than the sum of its two greater eigenvalues. Formally, we can get the sufficient condition as given below,
\begin{theorem}
If for any density matrix $\sigma$, $\underset{\theta_1,\theta_2,\theta_3}{Max} Tr((T_{\sigma}^\prime)^{\dagger} T^\prime_{\sigma}) \le 1$ then $\sigma\in \mathbf{AL}$ 
\end{theorem}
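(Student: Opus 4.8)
The plan is to derive this sufficiency criterion as an immediate consequence of the variational description of $F$ developed above, combined with a single elementary spectral fact. Recall from the proof of the spectral characterization that local unitaries leave the eigenvalues of $T^\dagger T$ unchanged, so that only the basic non-local unitary $U_d(\theta_1,\theta_2,\theta_3)$ can modify them; consequently $F(\sigma)=\underset{\theta_1,\theta_2,\theta_3}{Max}\, M(\sigma')$, where $M(\sigma')$ denotes the sum of the two largest eigenvalues of $(T_\sigma^\prime)^\dagger T_\sigma^\prime$ and $T_\sigma^\prime$ is the correlation matrix produced by the action of $U_d$. Since membership in $\mathbf{AL}$ is equivalent to $F(\sigma)\le 1$, it suffices to bound this maximized quantity by the maximized trace.

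First I would fix an arbitrary triple $(\theta_1,\theta_2,\theta_3)$ and note that $(T_\sigma^\prime)^\dagger T_\sigma^\prime$ is a real symmetric positive semidefinite $3\times 3$ matrix, so its three eigenvalues $\mu_1\ge\mu_2\ge\mu_3$ are all non-negative. Then $M(\sigma')=\mu_1+\mu_2$, whereas $Tr\bigl((T_\sigma^\prime)^\dagger T_\sigma^\prime\bigr)=\mu_1+\mu_2+\mu_3$, and because $\mu_3\ge 0$ we obtain the pointwise inequality $M(\sigma')\le Tr\bigl((T_\sigma^\prime)^\dagger T_\sigma^\prime\bigr)$, valid for every $(\theta_1,\theta_2,\theta_3)$. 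Taking the maximum over the three angles on both sides and using that a pointwise inequality is preserved under the supremum, I would conclude $F(\sigma)=\underset{\theta_1,\theta_2,\theta_3}{Max}\, M(\sigma')\le \underset{\theta_1,\theta_2,\theta_3}{Max}\, Tr\bigl((T_\sigma^\prime)^\dagger T_\sigma^\prime\bigr)$. The hypothesis that the right-hand side is at most $1$ then forces $F(\sigma)\le 1$, i.e. $\sigma\in\mathbf{AL}$.

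There is no serious technical obstacle here, since the statement reduces to positive semidefiniteness of $T^\dagger T$; the only point deserving care is the logical step of bounding $M$ by the trace \emph{before} maximizing, which is legitimate precisely because of the monotonicity of the maximum under a pointwise bound. I would stress in the write-up that the condition is merely sufficient and typically far from necessary: discarding the smallest eigenvalue $\mu_3$ is lossy exactly when $T^\dagger T$ carries appreciable weight in its third eigenvalue, so states whose maximized trace exceeds $1$ may still be \emph{absolutely Bell-CHSH local}. This gap is the reason the earlier theorems, which track the two largest eigenvalues exactly, give genuine equivalences whereas the present trace bound gives only an inclusion.
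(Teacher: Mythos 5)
Your argument is correct and is essentially the same as the paper's: the paper justifies this theorem only by the one-line remark that the trace of $(T_{\sigma}^\prime)^{\dagger} T^\prime_{\sigma}$ dominates the sum of its two largest eigenvalues, which is exactly your positive-semidefiniteness observation combined with taking the maximum over the angles. You have simply written out the pointwise bound and the monotonicity of the maximum explicitly, which the paper leaves implicit.
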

However, it should be noted in order to calculate the trace, individual changes in the bloch parameters has to be taken into account which can be cumbersome. The absence of some of the bloch parameters can simplify the process.
\section{Comparison with purity and Absolutely Bell-CHSH local ball}
\textit{Absolutely Bell-CHSH local} states can also be located in terms of their purity. This in turn allows them to be characterized in terms of their distance from the maximally mixed state.\\
\indent Analogous to a similar concept in the separability problem \cite{gurvits},it is useful to find an estimation of the size of the set containing \textit{absolutely Bell-CHSH local} states and also to find the maximum possible purity beyond which there cannot be any \textit{absolutely Bell-CHSH local} state. We pose the following optimization problems to answer the questions. \\
\indent Let us assume that the eigenvalues of the density matrix are $ a_1 , a_2 , a_3 ,a_4  $ in descending order. The condition for a state to be \textit{absolutely Bell-CHSH local} can also be re-framed as $(a_1-a_4)^2+(a_2 - a_3)^2 \le 1/2$. The problem of finding the maximum purity beyond which there cannot be any \textit{absolutely Bell-CHSH local} state can now be posed as :
\begin{eqnarray}
&&Maximize (a_1^2 + a_2^2 + a_3^2 + a_4^2),\nonumber \\
&&subject ~~ to : \nonumber \\
&&(i) (a_1-a_4)^2+(a_2 - a_3)^2 \le 1/2 \nonumber \\
&&(ii) a_1+a_2+a_3+a_4 =1 \nonumber \\
&&(iii) 1 \ge a_1 \ge a_2 \ge a_3 \ge a_4 \ge 0
\end{eqnarray} 
\indent A numerical computation yields the answer as  $ \frac{5}{8} $. The solution signifies that if the purity of a density matrix is greater than $ \frac{5}{8} $, it cannot be a \textit{absolutely Bell-CHSH local} state.\\
\indent The converse question of finding the minimum purity below which there cannot be any \textit{non-absolutely Bell-CHSH local} state can be posed as :
\begin{eqnarray}
&&Minimize (a_1^2 + a_2^2 + a_3^2 + a_4^2),\nonumber \\
&&subject ~~ to : \nonumber \\
&&(i) (a_1-a_4)^2+(a_2 - a_3)^2 > 1/2 \nonumber \\
&&(ii) a_1+a_2+a_3+a_4 =1 \nonumber \\
&&(iii) 1 \ge a_1 \ge a_2 \ge a_3 \ge a_4 \ge 0
\end{eqnarray}  
\indent Numerically, the answer is obtained as $ \frac{1}{2} $. This signifies that all states with a purity $ \le \frac{1}{2} $ is \textit{absolutely Bell-CHSH local}.\\
The above result can also be framed in terms of the Frobenius norm $\Vert A \Vert^2 = Tr(A^\dagger A)$. Specifically, $Tr(\rho^2) \le 1/2 \implies \Vert \rho - I/4 \Vert \le 1/2$, which is the \textit{absolutely Bell-CHSH local} ball. Every state within this ball is \textit{absolutely Bell-CHSH local}.  ($I/4$ is the maximally mixed state).\\
\indent On the other hand $Tr(\rho^2) > 5/8 \implies \Vert \rho - I/4 \Vert >  \sqrt{3}/2 \sqrt{2}$. Thus any state outside this ball cannot be 
\textit{absolutely Bell-CHSH local}.
\section{Illustrations} We now provide some illustrations on the application of our criterion. \\
(i) \textit{Absolutely Separable states -} Any absolutely separable state will be \textit{absolutely Bell-CHSH local}. This is because , absolutely separable states preserve their separability under global unitary operation. In set theoretic language, if we denote the set containing the absolutely separable states by $\mathbf{AS}$, then $\mathbf{AS}$ forms a subset of $\mathbf{AL}$. \\
(ii) Any pure product state cannot be \textit{absolutely Bell-CHSH local}, as they can always be converted to a pure entangled state by some global unitary \cite{ent global}.\\
(iii) If we take a state diagonal in computational basis, say $ \sigma_{mix} = \frac{1}{2}\vert 00 \rangle \langle 00 \vert + \frac{1}{2} \vert 11 \rangle \langle 11 \vert  $, then the state is \textit{absolutely Bell-CHSH local} as, $ (1/2-0)^2 + (1/2-0)^2 \le 1/2 $.\\
One can verify this from the perspective of bloch parameters. The only bloch parameter in this case is $ t_{33} = 1 $. Using the result obtained for states diagonal in computational basis , one sees that the state $ \in \mathbf{AL} $.\\
(iv) \textit{Werner state-} The Werner state is given as \cite{wernerstate}, $ \sigma_{wer} = p \vert \psi^{-} \rangle \langle \psi^{-} \vert + \frac{1-p}{4} I$($ \vert \psi^{-} \rangle = \frac{\vert 01 \rangle - \vert 10 \rangle }{\sqrt{2}} $). The state is absolutely separable for $ p \le 1/3 $, hence \textit{absolutely Bell-CHSH local} there. The eigenvalues are $ \lbrace (1+3p)/4 , (1-p)/4 , (1-p)/4 , (1-p)/4 \rbrace $. For $ p \le 1/\sqrt{2} $ it is \textit{absolutely Bell-CHSH local}. One may note that, the Werner state is a Bell diagonal state and it is \textit{Bell-CHSH local} upto $ 1/\sqrt{2} $.\\
This characterization can also be done in terms of the bloch parameters. As for the Werner state $ t_{11}=t_{22}=t_{33} = -p$, $ Max(t_{11}^2 + t_{22}^2, t_{11}^2 + t_{33}^2, t_{33}^2 + t_{22}^2) = 2p^2$. Therefore, for $ p \le 1/\sqrt{2} $, it is \textit{absolutely Bell-CHSH local}.Once again this reiterates the fact whenever a Bell diagonal state is \textit{Bell-CHSH local} it is \textit{absolutely Bell-CHSH local}.\\
 Since, the Werner states are entangled for $ p > 1/3 $, this also indicates that the absolutely separable states form a proper subset of the absolutely Bell-CHSH local states.\\
(v) \textit{Gisin states-} The Gisin states were proposed in \cite{gisin}.
Let $ \vert \psi_{\theta} \rangle  = sin \theta \vert 01 \rangle + cos \theta \vert 10 \rangle $ and $ \sigma_{mix} = \frac{1}{2}\vert 00 \rangle \langle 00 \vert + \frac{1}{2} \vert 11 \rangle \langle 11 \vert $ . Then the Gisin state is written as \cite{negentropy} ,
\begin{equation}
\sigma_{G} = \lambda \vert \psi_{\theta} \rangle  \langle \psi_{\theta} \vert + (1- \lambda) \sigma_{mix}
\end{equation}
with $ 0 < \theta < \pi/2, 0 \le \lambda \le 1 $.
We need to deal here taking two cases into consideration ,\\
(a) For $ \lambda \ge 1/3 $, the eigenvalues in the descending order are $ \lbrace \lambda , (1-\lambda)/2 , (1-\lambda)/2 , 0 \rbrace $. Thus, for $ 1/3 \le \lambda \le 1/\sqrt{2} $, the Gisin states are \textit{absolutely Bell-CHSH local}.\\
(b) For $ \lambda < 1/3 $, the eigenvalues in the descending order are the Gisin states are $ \lbrace (1-\lambda)/2 , (1-\lambda)/2 , \lambda , 0 \rbrace $. Therefore, in order to be \textit{absolutely Bell-CHSH local} , one must have $ 5 \lambda^{2} - 4 \lambda \le 0 $. This implies that the Gisin states are  \textit{absolutely Bell-CHSH local} if $ \lambda < 1/3 $. \\
Hence, combining the results above one obtains that the Gisin states are \textit{absolutely Bell-CHSH local} for $ 0 \le \lambda \le  1/\sqrt{2} $.\\
The above characterization can also be done in terms of bloch parameters as given below. For the Gisin states, the non-zero parameters in the bloch representation are $ u_3 , v_3 , t_{11} , t_{22} , t_{33} $. The changed correlation matrix $ T^{\prime} $ due to the action of the basic non-local operator is,
\begin{equation}
\begin{bmatrix}
x11 & x12 & 0\\
x21 & x22 & 0\\
0 & 0 & x33
\end{bmatrix}
\end{equation}
where $ x11 = t_{11} , x22 = t_{22} , x33 = t_{33}, x12 = v_3 cos \theta_2 sin \theta_1 - u_3 sin \theta_2 cos \theta_1 , x21 = u_3 cos \theta_2 sin \theta_1 - v_3 sin \theta_2 cos \theta_1$. On calculation of the eigenvalues of the corresponding symmetric matrix with the explicit values of the parameters and subsequent maximization, it is found that the Gisin states are in $ \mathbf{AL} $ for $ \lambda \le 1/\sqrt{2} $.
\section{Local filtering and Global unitary operations}
Nonlocality of certain local entangled states can be revealed by using local filtering operations before the standard Bell test \cite{gisin,filterpopescu}. This phenomenon was termed as 'hidden nonlocality'. In particular, Gisin \cite{gisin} presented a class of mixed two qubit states that do not violate Bell-CHSH inequality in the standard Bell test but do so when judicious local filtering operations are applied to the state before performing a standard Bell test.  As both global unitary operation and local filtering generate nonlocality so it is important to compare them. Following the work of \cite{gisin}, several aspects of local filtering in Bell test have been discussed \cite{verstraetefilter}. Recently, a necessary and sufficient condition for any two qubit state to remain Bell-CHSH local after application of local filtering operation has been derived in \cite{ghosh13}. In particular, for Bell-diagonal states the above necessary and sufficient condition can be expressed in terms of their spectrum, as follows \cite{ghosh13},\\
\textit{Any Bell-diagonal state $\sigma_{bell}$ remains Bell-CHSH local after the application of local filtering if and only if $(2 a_1 + 2 a_2 -1)^2 + (2 a_1 + 2 a_3 -1)^2 \leq 1$ where $a_1$, $a_2$, $a_3$ are the highest three eigen values of $\sigma_{bell}$.}\\
This result then implies that local filtering operation cannot increase nonlocality of any Bell diagonal state. The fact that local filtering operations cannot increase Bell-CHSH violation of Bell diagonal states was also proven in \cite{verstraetebell}. In fact, in addition to that, the above condition matches our criterion for absolute Bell-CHSH locality. So both global unitary and local filtering operations are unable to increase nonlocality of Bell-diagonal states. But we will see that these two operations are inequivalent for some other class of two qubit states as far as enhancement of nonlocality is concerned.\\
(i) Consider the following two qubit state $\rho_f = q|\psi_{-}\rangle\langle \psi_{-} | + \frac{1-q}{2} (|00\rangle \langle 00| + |01\rangle \langle 01|) $ where $0 \leq q \leq 1$.\\ 
This state exhibits nonlocality for any $q > 0$ when an appropriate local filtering is used \cite{fhirsch}. But according to our criterion, this state is absolutely Bell-CHSH local if and only if $q \leq 0.5673054$. Here local filtering gives advantage over global unitary operation to increase nonlocality of this state in the range $0 < q \leq 0.5673054$. \\ 
(ii) Consider the noisy partial entangled state $\rho_g = p|\psi_{\theta}\rangle\langle \psi_{\theta} | + \frac{(1-p)}{4}I$ where $|\psi_{\theta}\rangle = \cos\theta |01\rangle + \sin \theta |10\rangle$ and $0\leq p \leq 1$. This state does not violate CHSH inequality after the application of local filtering if and only if $q^2(1-\cos 4\theta) > 2\sqrt{1+2q-q^2-2q^2\cos4\theta}$ but violates our criterion when $q > \frac{1}{\sqrt2}$. Hence in this case, there exists a range of state parameters where global unitary operation gives advantage over local filtering operations pertaining to enhancement of nonlocality. 
\section{Conclusion} Bell-CHSH non-locality \cite{Bell64,CHSH69} which is more of a statistical feature exhibited by quantum states, had been first characterized in terms of the state parameters in \cite{horobell}. However, unlike absolute separability the relation between the violation Bell-CHSH inequality and the change of basis of the underlying Hilbert space has remained unexplored. Any change in basis is brought about by global unitary operations. Here, in this work, we have made a probe on quantum states which cannot be made Bell-CHSH non-local even by global unitary operations. This also entails the characterization of quantum states which preserve their Bell-CHSH local character under any factorization of the underlying Hilbert space.\\
\indent Using the Cartan decomposition of $ SU(4) $, we have laid down criterion to  identify such states in terms of their spectrum and their Hilbert-Schmidt representations. The volume of such states have been estimated in terms of purity. The criterion find support from illustrations from different class of states.\\
\indent Both global unitary and local filtering operations fail to increase CHSH violation of Bell diagonal states. However, the two actions are completely inequivalent(in terms of increasing Bell-CHSH violation). A comparison between the two operations have been done in our work through their actions on some quantum states.\\
\indent The present work also raises some pertinent questions. One may attempt to find such conditions for other Bell inequalities in two qubits as well as for higher dimensions. Extension of the concept to include multipartite Bell inequalities is another direction of useful investigation.\\
\acknowledgements
 We acknowledge Prof.Michael Hall for his enriching inputs. We would also like to gratefully acknowledge fruitful discussions with Prof.Guruprasad Kar and Prof. Sibasish Ghosh. We also thank Tamal Guha and Mir Alimuddin for useful discussions. AM acknowledges support from the CSIR project 09/093(0148)/2012-EMR-I.   
\begingroup

\endgroup
\end{document}